\title{On Decidability of $2$-process Affine Models} 
\newcommand\ignore[1]{}
\date{}
\author[1]{Petr Kuznetsov\thanks{petr.kuznetsov@telecom-paris.fr}}
\affil[1]{LTCI, T\'el\'ecom Paris, Institut Polytechnique Paris, Paris, France}{}{}
\author[2]{Thibault Rieutord\thanks{thibault.rieutord@cea.fr}}
\affil[2]{CEA, LIST, PC 174, Gif-sur-Yvette, France}
\def\I{\ensuremath{\mathcal{I}}}
\def\O{\ensuremath{\mathcal{O}}}
\def\K{\ensuremath{\mathcal{K}}}
\def\L{\ensuremath{\mathcal{L}}}
\def\C{\ensuremath{\mathcal{C}}}
\newcommand{\remove}[1]{}
\newtheorem{theorem}{Theorem}
\def\s {\mathbf{s}}
\def\t {\mathbf{t}}
\def\Chr{\operatorname{Chr}}
\def\O {\mathcal{O}}
\def\I {\mathcal{I}}
\def\IS{\textit{IS}}
\begin{document}

\maketitle

\begin{abstract}
An affine model of computation is defined as a subset of iterated immediate-snapshot runs, capturing a wide variety of shared-memory systems, such as wait-freedom, $t$-resilience, $k$-concurrency, and fair shared-memory adversaries. The question of whether a given task is solvable in a given affine model is, in general, undecidable. 

In this paper, we focus on affine models defined for a system of two processes.  We show that the task computability of $2$-process affine models is decidable and presents a complete hierarchy of the five equivalence classes of $2$-process affine models. 
\end{abstract}

\section{Introduction}

The question of whether a task is solvable in a \emph{wait-free} manner, i.e., in the asynchronous read-write shared-memory model with no restrictions on who and when can fail, is known to be undecidable for systems with more than $3$ processes~\cite{GK99-undecidable,HR97}. We can still, however, study the \emph{relative} computability of models of computation. The framework of \emph{affine models} was introduced to capture the task computability of various restrictions of the wait-free model~\cite{GKM14-podc}. 

More precisely, an \emph{affine task} $A$ on $n+1$ processes can be represented as a pure (i.e., with facets of dimension $n$) $n$-dimensional non-empty sub-complex of a finite number of iterations of the \emph{standard chromatic subdivision}, i.e., $A\subseteq \Chr^k \s, k\in \mathbb{N}, pure(A)$. Many shared-memory models such as $t$-resilience~\cite{SHG16}, $k$-set consensus~\cite{GHKR16} or the class of fair adversaries~\cite{KRH18} are characterized as affine models.
The corresponding \emph{affine model}, denoted by $A^*$, is characterized by its ability to solve tasks as follows: $A$ solves a task $(\I,\Delta,\O)$ if and only if there is a natural integer $b\in \mathbb{N}$ and a simplicial map $\delta$: $A^b (\I) \rightarrow \O$ such that $\delta$ is carried by $\Delta$, i.e., $\forall s \in I, \delta(A^b (\I)) \subseteq \Delta(s)$. A~natural question is therefore to compare relative task computability of affine models:
\begin{quote}
$A^*$ is \emph{stronger} than $B^*$, i.e., $A^*\succeq_\mathcal{A} B^*$, if all tasks solvable in $B^*$ can be solved in~$A^*$.
\end{quote}
Hence, we can state our problem as follows:
\begin{quote}
Given two affine tasks, $A$ and $B$, is the question of whether $A^*\succeq_\mathcal{A} B^*$ decidable? 
\end{quote}
Equivalently, we can study decidability of the question whether \emph{$A^*$ solves $B$}, i.e., whether $A$ solves the \emph{simplex agreement} task on $B$~\cite{BG97}.
Indeed, suppose that $A^*$ solves $B$, inductively, for any $b\in \mathbb{N}$, $A^*$ solves $B^b$. Thus, any task solvable in $B^*$ can be solved in $A^*$.

In this paper, we first present a framework for studying affine task decidability in $2$-process affine models. It allows us to provide a complete hierarchy of $2$-process affine models, including most, if not all, shared-memory models. 
We show that all $2$-process affine models fall into  five equivalence classes, each class equipped with a \emph{representative} defined as a subset of  a single iteration of the standard chromatic subdivision. The order presented in Figure~\ref{Dim1OrderSCLC} provides a complete hierarchy of their relative task computability.

An intriguing question is whether this approach can be applied to higher-dimensional systems. One approach could be to focus on models defined using  \emph{link-connected} affine tasks. 

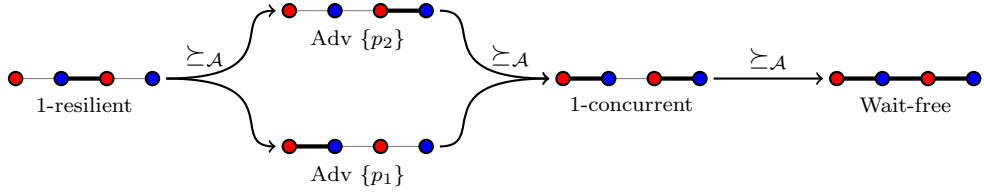
\begin{figure}
\center
		\begin{tikzpicture}[scale=1.8]
			\draw[gray] (0.000000,0.500000) -- (0.333333,0.500000);
			\draw[black,ultra thick] (0.333333,0.500000) -- (0.666667,0.500000);
			\draw[gray] (0.666667,0.500000) -- (1.000000,0.500000);
			\draw (0.000000,0.500000) node[circle,inner sep=0pt,minimum size=5pt, fill=red,draw,thick] {} ;
			\draw (0.666667,0.500000) node[circle,inner sep=0pt,minimum size=5pt, fill=red,draw,thick] {} ;
			\draw (0.333333,0.500000) node[circle,inner sep=0pt,minimum size=5pt, fill=blue,draw,thick] {} ;
			\draw (1.000000,0.500000) node[circle,inner sep=0pt,minimum size=5pt, fill=blue,draw,thick] {} ;
			\draw (0.500000,0.500000) node[label=below :{\footnotesize  $1$-resilient}] {} ;			
			\draw[thick,->]	(1.1,0.5) to[in=180,out=0,out looseness=2.5,in looseness=1] node [above,very near start] {$\succeq_\mathcal{A}$} (1.9,1);	
			\draw[thick,->]	(1.1,0.5) to[in=180,out=0,out looseness=2.5,in looseness=1] (1.9,0);

			\draw[black,ultra thick] (2.000000,0.000000) -- (2.333333,0.000000);
			\draw[gray] (2.333333,0.000000) -- (2.666667,0.000000);
			\draw[gray] (2.666667,0.000000) -- (3.000000,0.000000);
			\draw (2.000000,0.000000) node[circle,inner sep=0pt,minimum size=5pt, fill=red,draw,thick] {} ;
			\draw (2.666667,0.000000) node[circle,inner sep=0pt,minimum size=5pt, fill=red,draw,thick] {} ;
			\draw (2.333333,0.000000) node[circle,inner sep=0pt,minimum size=5pt, fill=blue,draw,thick] {} ;
			\draw (3.000000,0.000000) node[circle,inner sep=0pt,minimum size=5pt, fill=blue,draw,thick] {} ;
			\draw (2.500000,0.000000) node[label=below :{\footnotesize Adv $\{p_1\}$}] {} ;	
			
			\draw[gray] (2.000000,1.000000) -- (2.333333,1.000000);
			\draw[gray] (2.333333,1.000000) -- (2.666667,1.000000);
			\draw[black,ultra thick] (2.666667,1.000000) -- (3.000000,1.000000);
			\draw (2.000000,1.000000) node[circle,inner sep=0pt,minimum size=5pt, fill=red,draw,thick] {} ;
			\draw (2.666667,1.000000) node[circle,inner sep=0pt,minimum size=5pt, fill=red,draw,thick] {} ;
			\draw (2.333333,1.000000) node[circle,inner sep=0pt,minimum size=5pt, fill=blue,draw,thick] {} ;
			\draw (3.000000,1.000000) node[circle,inner sep=0pt,minimum size=5pt, fill=blue,draw,thick] {} ;
			\draw (2.500000,1.000000) node[label=below :{\footnotesize Adv $\{p_2\}$}] {} ;	
			
			\draw[->,thick] (3.1,1) to[in=180,out=0,in looseness=2.5,out looseness=1] node [above,very near end] {$\succeq_\mathcal{A}$} (3.9,0.5);	
			\draw[->,thick] (3.1,0) to[in=180,out=0,in looseness=2.5,out looseness=1](3.9,0.5);	
			
			\draw[black,ultra thick] (4.000000,0.500000) -- (4.333333,0.500000);
			\draw[gray] (4.333333,0.500000) -- (4.666667,0.500000);
			\draw[black,ultra thick] (4.666667,0.500000) -- (5.000000,0.500000);
			\draw (4.000000,0.500000) node[circle,inner sep=0pt,minimum size=5pt, fill=red,draw,thick] {} ;
			\draw (4.666667,0.500000) node[circle,inner sep=0pt,minimum size=5pt, fill=red,draw,thick] {} ;
			\draw (4.333333,0.500000) node[circle,inner sep=0pt,minimum size=5pt, fill=blue,draw,thick] {} ;
			\draw (5.000000,0.500000) node[circle,inner sep=0pt,minimum size=5pt, fill=blue,draw,thick] {} ;
			\draw (4.500000,0.500000) node[label=below :{\footnotesize $1$-concurrent}] {} ;
			
			\draw[->,thick]	(5.1,0.5) to node[above] {$\succeq_\mathcal{A}$} (5.9,0.5);		
			
			\draw[black,ultra thick] (6.000000,0.500000) -- (6.333333,0.500000);
			\draw[black,ultra thick] (6.333333,0.500000) -- (6.666667,0.500000);
			\draw[black,ultra thick] (6.666667,0.500000) -- (7.000000,0.500000);
			\draw (6.000000,0.500000) node[circle,inner sep=0pt,minimum size=5pt, fill=red,draw,thick] {} ;
			\draw (6.666667,0.500000) node[circle,inner sep=0pt,minimum size=5pt, fill=red,draw,thick] {} ;
			\draw (6.333333,0.500000) node[circle,inner sep=0pt,minimum size=5pt, fill=blue,draw,thick] {} ;
			\draw (7.000000,0.500000) node[circle,inner sep=0pt,minimum size=5pt, fill=blue,draw,thick] {} ;
			\draw (6.500000,0.500000) node[label=below :{\footnotesize Wait-free}] {} ;
		\end{tikzpicture}
\caption{Relations between canonical affine tasks and corresponding models.\label{Dim1OrderSCLC}}
\end{figure}

\section{Preliminaries}

Let us now recall several notions from combinatorial topology. 
For more detailed coverage of the topic, please refer to~\cite{Spanier,HKR14}.

\subparagraph*{Simplicial complex.}
A {\em simplicial complex} is a set $V$, together with an \emph{inclusion-closed}
collection $\K$ of finite non-empty subsets of $V$ such that:
\begin{enumerate}
\item For any $v \in V$, the one-element set $\{v\}$ is in $\K$;
\item If $\sigma \in \K$ and $\sigma' \subseteq \sigma$, then $\sigma' \in \K$.
\end{enumerate}

The elements of $V$ are called {\em vertices}, and the elements 
of $\K$ are called {\em simplices}. We usually drop $V$ from the notation 
and refer to the simplicial complex as $\K$ directly. Indeed, we can extract
from $\K$ the set of vertices composing it. We denote as $\mathit{Vert}(\K)$ 
the set of vertices of $\K$.
A simplicial complex $\K$ is {\em finite} if the collection $\K$ is finite. 
For simplicity, we will assume that our complexes are finite.

The \emph{dimension} of a simplex $\sigma$, denoted $\dim(\sigma)$,
is its cardinality minus one, i.e., $\#(\sigma)-1$ (the use of $|.|$ will be avoided
since it is traditionally used for the geometrical representation of a simplex). 
Any subset of a simplex $\sigma$ is also a simplex and is called a \emph{face} of $\sigma$. 
We denote as $\mathit{faces}(\sigma)$ the set containing all faces of $\sigma$. 
Given a complex $\K$ and a simplex~$\sigma\in \K$, $\sigma$ is a \emph{facet} of $\K$, 
denoted $\mathit{facet}(\sigma,\K)$, if $\sigma$ is not the face of any strictly larger simplex in $\K$.
Let $\mathit{facets}(K)= \{\sigma\in \K,\mathit{facet}(\sigma,K)\}$.
The dimension of a complex is equal to the maximal dimension of the simplices composing it.

A {\em sub-complex} of $\K$ is a subset of $\K$ that is also a simplicial complex.
A simplicial complex~$\K$ is called {\em pure} of dimension $n$ if $\K$ has no 
simplices of dimension $> n$, and every $k$-dimensional simplex of $\K$ (for $k < n$) 
is a face of an $n$-dimensional simplex of $\K$. Hence, equivalently, a simplicial complex $\K$ is 
pure of dimension $n$ if all its facets are of dimension $n$.

\subparagraph*{Chromatic complexes.}
We now turn to the chromatic complexes used in distributed computing.
Fix $n \geq 0$. The {\em standard $n$-simplex} $\s^n$ has $n+1$ vertices, 
in one-to-one correspondence with $n+1$ {\em colors} $0, 1, \dots, n$. 
A face $\t$ of $\s$ is specified by a collection of vertices from $\{0, \dots, n\}$. 
We view $\s^n$ as a complex, with its simplices being all possible faces~$\t$.

A {\em chromatic complex} is a simplicial complex $\K$ together with 
a non-collapsing simplicial map $\chi: \K \to \C$, $\C$ being a set of 
colors. See the following paragraphs for formal definitions about simplicial 
maps, but informally, it corresponds to associating  
colors to any vertex of a {\em chromatic complex} 
such that all vertices of the same simplex have distinct associated colors.
Note that therefore, $\K$ can have dimension at most $\#(\C)-1$. 
We usually drop $\chi$ from the notation and consider that vertices 
are couples $(v,c)$ where $v$ is the vertex and $c$ its 
associated color. We write $\chi(\K)$ for the union of $\chi(v)$ 
over all vertices $v \in \mathit{Vert}(\K)$. Note that if $\K' \subseteq \K$ is 
a sub-complex of a chromatic complex, it inherits a chromatic 
structure by restriction.
In particular, the standard $n$-simplex $\s^n$ is a chromatic complex, 
with $\chi$ being the identity map.

In our setting, colors correspond by default to processes identifiers. 
In this case, the set of colors of a complex is equal to $\chi(\s^n)$. 
Since most of the time the size of the system is fixed and known from the context, 
we use $\s$ to denote the standard $(\#(\Pi)-1)$-simplex. 
Note that, when colors correspond to processes identifiers, we use 
the map $\chi$ to obtain both the color of a vertex and the process 
corresponding to the identifier.  

\subparagraph*{Maps.}
Let $\K$ and $\L$ be simplicial complexes. 
A simplicial map $f: \K \to \L$ is a 
function from~$\K$ to $\L$ such that for any face $\theta$ of a 
simplex $\sigma\in \K$, then $f(\theta)$ is a face of $f(\sigma)$ in~$\L$. 
A~simplicial map is said to be non-collapsing if for any strict 
face $\theta$ of a simplex $\sigma\in \K$, then~$f(\theta)$ is a 
strict face of $f(\sigma)$.
Hence, the image of an $m$-dimensional simplex through a non-collapsing 
map is also an $m$-dimensional simplex.
Let $\K$ and $\L$ be chromatic complexes. A simplicial 
map~$f: \K \to \L$ is a \emph{color-preserving}, also called a \emph{chromatic map}, 
if for all vertices~$v \in \mathit{Vert}(\K)$, we have $\chi(v) = \chi(f(v))$. 
Note that a color-preserving map is automatically non-collapsing. 

A carrier map  $\Psi: \K \to 2^\L$ sends simplices to sub-complexes 
such that a face $\theta$ of a simplex~$\sigma\in \K$ is sent 
to a complex $\Psi(\theta)$ which is a sub-complex of $\Psi(\sigma)$.
A simplicial map $\phi$ is carried by the carrier map $\Psi$ if 
$\phi(\sigma) \in \Psi(\sigma)$ for every simplex $\sigma$ in its
domain.

\paragraph*{Continous representation.} 
We can associate a simplicial complex $\K$ with a topological space 
$|\K|$, called its geometrical realization. The geometrical 
realization is defined incrementally: first, each vertex of $\K$ 
is associated with points in $[0,1]^{\dim(\K)}$ such that vertices 
from the same simplex are associated with affinely 
independent positions; then the geometric realization of a 
simplex is equal to the convex-hull of its vertices.

Note that given a simplicial map $f: \K \to \L$, we can extend it 
linearly to obtain a continuous map $|f|:|\K| \to |\L|$: a point 
$p\in |\K|$ is a linear combination of the vertices from $\K$ and 
its image in $|f|$ is the same linear combination of the images of 
the vertices.

\paragraph*{Subdivisions}

An important notion in combinatorial topology and more specifically for 
its application to distributed computing is the notion of \emph{subdivision}.
A subdivision of a simplicial complex~$\K$, is a simplicial complex $\mathit{Sub}(\K)$ 
such that: (1) the geometrical realization of any simplex from~$\mathit{Sub}(\K)$ is 
included in the geometrical realization of a simplex from $\K$; and (2) 
the geometrical realization of a simplex in $\K$ is the union of 
geometric realizations of simplices of~$\mathit{Sub}(\K)$. Note that the subdivision 
of a subdivision of $\K$ is by definition of subdivision of~$\K$.

\subparagraph*{Standard chromatic subdivision.}
Every chromatic complex $\K$ has a {\em standard chromatic subdivision} $\Chr \K$. 
Let us first define $\Chr \s$ for the standard simplex $\s$. The vertices of~$\Chr \s$ 
are pairs $(i, \t)$, where $i \in \{0,\dots, n\}$ and $\t$ is a face of $\s$ containing $i$. 
We let $\chi(i, \t) = i$. Further,~$\Chr s$ is characterized by its $n$-simplices; 
they are the $(n+1)$-tuples $((0,\t_0), \dots, (n, \t_n))$ such that:
\begin{enumerate}[(a)]
\item For all $\t_i$ and $\t_j$, one is a face of the other;
\item If $j \in \t_i$, then $\t_j \subseteq \t_i$. 
\end{enumerate} 

Next, given a chromatic complex $\K$, we let $\Chr \K$ be the 
subdivision of $\K$ obtained by replacing each simplex in $\K$ 
with its chromatic subdivision. Thus, the vertices of $\Chr \K$ 
are pairs $(p, \sigma)$, where $p$ is a vertex of $\K$ and $\sigma$ 
is a simplex of $\K$ containing $p$. If we iterate this process $m$ times, we
obtain the $m^\mathit{th}$ chromatic subdivision, $\Chr^m \K$.

It has been shown formally by Kozlov in~\cite{Koz12} that $\Chr$ is indeed a \emph{subdivision}. In particular, 
the geometric realization of $\Chr\s$, $|\Chr\s|$, is homeomorphic to $|\s|$, 
the geometric realization of~$\s$ (i.e., the convex hull of its vertices).

If we \emph{iterate} this subdivision~$m$ times, each time 
applying $\Chr$ to all simplices, we 
obtain the~$m^{th}$ chromatic subdivision,~$\Chr^m$. 
$\Chr^m \s$ precisely captures the $m$-round IIS model,~$\IS^m$~\cite{BG97,HS99}.

\subparagraph*{Carriers.}
Given a complex $\K$ and a subdivision of it, $\mathit{Sub}(\K)$, 
the carrier of a simplex $\sigma\in \mathit{Sub}(\K)$ in $\K$, 
$\mathit{carrier}(\sigma,\K)$, is the smallest simplex $\rho\in \K$ 
such that the geometric realization of~$\sigma$,~$|\sigma|$, is contained in $|\rho|$: 
$|\sigma|\subseteq|\rho|$. 
The carrier of a vertex~$(p,\sigma)\in\Chr \s$ is $\sigma$. 
In the matching $\IS$ task, the carrier corresponds to the snapshot returned by~$p$, 
i.e., the set of processes \emph{seen} by $p$. The carrier of a simplex 
$\rho \in \Chr \K$ is just the union (or, due to inclusion, the maximum) 
of the carriers of vertices in $\rho$. 
Given a simplex $\sigma\in\Chr^2\s$, $\mathit{carrier}(\sigma,\s)$ is equal to 
$\mathit{carrier}(\mathit{carrier}(\sigma,\Chr\s),\s)$. $\mathit{carrier}(\sigma,\Chr\s)$
corresponds to the set of all snapshots seen by processes in $\chi(\sigma)$.
Hence, $\mathit{carrier}(\sigma,\s)$ corresponds to the union of all these snapshots. 
Intuitively, it results in the set of all processes \emph{seen} by processes 
in $\chi(\sigma)$ through the two successive immediate snapshots instances.

\subparagraph*{Simplex agreement task.}
In the \emph{simplex agreement task}, processes start on vertices of some complex~$\K$ 
forming a simplex $\sigma\in \K$, and they must output vertices of some subdivision of~$\K$, 
$\mathit{Sub}(\K)$, so that outputs constitute a simplex $\rho$ of $\mathit{Sub}(\K)$ respecting carrier 
inclusion,~i.e.,~$\mathit{carrier}(\rho,\K)\subseteq\sigma$.

Such tasks are primordial for the proof of the asynchronous computability 
theorem (ACT)~\cite{HS99}. Indeed, given a simplicial map from a subdivision of 
the task input complex solving the task, processes must manage first to solve 
the simplex agreement task on the given subdivision to be able to apply 
the task solution provided by the simplicial map. In the original version of the ACT, 
a map could be given for an arbitrary subdivision. But using the equivalence with the 
IIS model, it was shown that we could  improve the result 
by considering only iterations of the standard chromatic subdivision~\cite{BG97}.

\subparagraph*{Convergence algorithm.}
The next theorem is a corollary of work by Borowsky and Gafni~\cite{BG97} (and described in detail by Saraph et al.~\cite{SHG18}), which turns continuous maps into simplicial maps.

\begin{theorem}[convergence algorithm]
Let $\mathcal{I}$ and $\mathcal{O}$ be chromatic complexes,
$\Gamma:~\mathcal{I}~\to~2^\mathcal{O}$ a carrier map such that $\Gamma(\sigma)$ is
link-connected for each $\sigma \in \mathcal{I}$,
and $f:~|\mathcal{I}|~\to~|\mathcal{O}|$ a continuous map carried by $\Gamma$.
Then there exists a chromatic, carrier-preserving simplicial map
$\phi:~\Chr^N(\mathcal{I})~\to~\mathcal{O}$, for some sufficiently large $N$,
also carried by $\Gamma$.\label{thm:conv}
\end{theorem}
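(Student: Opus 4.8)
The plan is to prove this as a \emph{chromatic, carrier-constrained} simplicial approximation theorem, built by induction on skeleta and driven by the connectivity of the carrier. First I would record the two quantitative facts that let the continuous data be replaced by combinatorial data: that the standard chromatic subdivision is mesh-shrinking, so $\operatorname{mesh}(\Chr^N\I)\to 0$ as $N\to\infty$ (part of Kozlov's proof that $\Chr$ is a genuine subdivision), and that $f$ is uniformly continuous on the compact polyhedron $|\I|$. Combining these with a Lebesgue-number argument against the open-star cover $\{\mathrm{st}(v):v\in\mathit{Vert}(\O)\}$ of $|\O|$, I would fix $N$ so large that the $f$-image of every closed simplex of $\Chr^N\I$ sits inside a single open star of $\O$. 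For a vertex $w$ of $\Chr^N\I$, writing $\tau=\carrier(w,\I)$, the hypothesis that $f$ is carried by $\Gamma$ gives $f(w)\in|\Gamma(\tau)|$, so the relevant stars can always be taken inside the correct carrier.

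With $N$ fixed I would construct $\phi$ by induction on $\Skel^k(\Chr^N\I)$. On $\Skel^0$, for each vertex $w$ of color $i$ I would select $\phi(w)$ to be a color-$i$ vertex of $\Gamma(\tau)$ lying in the star dictated by $f(w)$; this simultaneously enforces chromaticity and vertexwise carrier inclusion $\phi(w)\in\Gamma(\carrier(w,\I))$. For the inductive step over a simplex $\rho=\{w_0,\dots,w_k\}$, note that $\carrier(\rho,\I)\supseteq\carrier(w_j,\I)$ for each $j$, so each $\phi(w_j)$ already lies in the common complex $\Gamma(\carrier(\rho,\I))$, and the partially defined $\phi$ restricts to a chromatic simplicial map of $\partial\rho$ into $|\Gamma(\carrier(\rho,\I))|$. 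The task is then to extend this boundary map over $\rho$ while remaining chromatic and inside the same carrier, after which a further chromatic subdivision and simplicial approximation convert the extension back to simplicial form; since $\I$ is finite and finite-dimensional, only finitely many such refinements are needed and they can be absorbed into a single large $N$.

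The extension step is the crux, and this is exactly where link-connectivity is indispensable: the boundary map represents an element of a homotopy group of $|\Gamma(\carrier(\rho,\I))|$ in dimension $\dim\rho-1$, and link-connectivity is precisely the combinatorial hypothesis that makes each $\Gamma(\sigma)$ connected enough (through the relevant dimension) to kill that obstruction, so the boundary map is null-homotopic inside the carrier and therefore extends over $\rho$. The continuous carried map $f$ serves as the anchor guaranteeing these fillings are available; connectivity then upgrades each filling to one respecting the carrier and color constraints. I expect this to be the main obstacle, because an unconstrained simplicial approximation of $f$ respects neither color nor carrier, and the two constraints must be enforced \emph{together} with the requirement that the chosen images span a simplex of $\O$.

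The delicate point I would treat most carefully is the simultaneous control of chromaticity and carrier inclusion during the extension. The resolution is that $\Chr$ is color-preserving, so at every stage each color is genuinely present at vertices throughout the relevant carrier, while link-connectivity routes each color to a compatible target vertex without leaving $\Gamma(\carrier(\rho,\I))$; this is the topological content of Borowsky and Gafni's convergence algorithm, in which each process refines its view through immediate-snapshot rounds---iterations of $\Chr$---maintaining an estimate that converges toward $f$ at its position, with link-connectivity ensuring that after finitely many rounds the surviving, correctly colored estimates are mutually compatible enough to commit to one output simplex. Once the induction reaches the top-dimensional simplices, the resulting $\phi:\Chr^N\I\to\O$ is by construction chromatic, carried by $\Gamma$, and simplicial, which completes the argument.
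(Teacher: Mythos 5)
The paper does not prove this theorem at all: it is imported verbatim as a known result, ``a corollary of work by Borowsky and Gafni \cite{BG97} (and described in detail by Saraph et al.~\cite{SHG18})''. So there is no in-paper proof to compare against, and your task here was really to reconstruct the external argument. Your outline follows the right general shape (mesh-shrinking of $\Chr$, uniform continuity and a Lebesgue-number argument against the star cover, then a skeleton-by-skeleton construction of a chromatic, carrier-constrained approximation), which is indeed the structure of the Borowsky--Gafni/Saraph--Herlihy--Gafni argument.

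However, there is a genuine gap at what you yourself identify as the crux. You assert that link-connectivity ``is precisely the combinatorial hypothesis that makes each $\Gamma(\sigma)$ connected enough (through the relevant dimension) to kill that obstruction,'' i.e., that the boundary map of $\rho$ is null-homotopic \emph{because} $\Gamma(\carrier(\rho,\I))$ is highly connected. That is not what link-connectivity says: it is a condition on the connectivity of the \emph{links of simplices} of $\Gamma(\sigma)$, and a link-connected complex need not be simply connected, let alone $(\dim\rho-1)$-connected, so it cannot by itself kill homotopy obstructions. In the actual argument the null-homotopy comes from the continuous map $f$ (the image of $|\rho|$ under $f$ is a contractible set sitting inside $|\Gamma(\carrier(\rho,\I))|$, which supplies the filling); link-connectivity is used for a different purpose, namely to perturb a non-chromatic simplicial approximation into a chromatic one without leaving the carrier --- each wrongly-colored vertex choice is rerouted through the (connected) link of the simplex spanned by the already-fixed neighbors. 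Your sketch gestures at this (``connectivity then upgrades each filling to one respecting the carrier and color constraints'') but conflates the two roles, and the chromatic-correction step, which is the real content of the convergence algorithm and the only place link-connectivity enters, is left unargued. As written, the proof would not go through; it needs the explicit mechanism by which a color-$i$ vertex near $f(w)$ is shown to exist and to be compatible with the neighbors' choices.
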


\paragraph*{Affine tasks}

An \emph{affine task} is a generalization of the simplex agreement
task, where the output complex is a pure non-empty sub-complex 
of some finite number of iterations of the standard chromatic subdivision, 
$\Chr^{\ell}\s$. 
Formally, let $\L$ be a pure non-empty sub-complex of $\Chr^{\ell}\s^n$ 
of dimension~$n$ for some~$\ell\in \mathbb{N}$. The affine task associated to $\L$ 
is then defined as $(\s^n,\L,\Delta)$, where, for every face $\sigma \subseteq \s^n$,
$\Delta(\sigma) = \L \cap \Chr^{\ell}(\sigma)$. 
Hence, processes start on vertices of their color in~$\s^n$ and must eventually output 
vertices of $\L$ of their color such that the set of outputs forms 
a simplex in $\L$ with a carrier equal to the set of observed processes. 
Note that $\L \cap \Chr^{\ell}(\t)$ can be empty, in which case processes
are not allowed to output views containing only the inputs from processes in $\chi(\t)$.
Intuitively, this is used to guarantee that the model can provide sufficiently 
large participation to be able to solve the affine task.

Note that, since an affine task is characterized by its output complex, 
with a slight abuse of notation, we use~$\L$ for both the affine task 
$(\s,\L,\Delta)$ and its output complex.

\subparagraph*{Affine model.}
It can be noted that an affine task can also be seen as an operator 
on any pure simplicial complexes of the same dimension. Indeed, given 
a pure simplicial complex~$\K$, we can construct the simplicial complex
$\L(\K)$ where each facet of $\K$ is replaced by an occurrence of $\L$. 
In particular, since the operation maintains purity, we can iterate this 
operation by recursively replacing simplices by an occurrence of the affine task.
 
By running~$m$ iterations of this operation on $\s$, we obtain $\L^m(\s)$, a 
sub-complex of $\Chr^{\ell m}\s$. 
The affine model $\L^*$ corresponding to the affine task $\L$ 
is obtained by iterating infinitely often the affine task. 
The $m^{th}$ iteration of an affine task corresponds to 
a subset of $\IS^{~\ell m}$ runs (as each of the $m$ iterations includes $\ell$ $\IS$ rounds). 
Hence the affine model $\L^*$ corresponds to the set of infinite runs 
of the IIS model where every prefix restricted to a multiple of 
$\ell$ $\IS$ rounds belongs to the subset of $\IS^{~\ell m}$ 
runs associated with $\L^m$. 

Note that, by construction, affine models are compact. 
Indeed, they are defined through a ``safety'' property 
on the set of IIS runs: if all prefixes of an IIS run satisfy 
the model conditions then the infinite run belongs to the model.

\section{Computing equivalence classes}

We  start with identifying simple equivalence classes on $2$-process affine tasks via a simple predicate on a set of properties. 
We start by defining a partition on affine tasks and then show that tasks in the same class are equivalents, we can then select representative of each classes.

\subparagraph*{Property selection.}  The power a $2$-process system heavily relies on the properties of \emph{solo executions}, i.e., the endpoints of the corresponding affine task.
Assuming a fixed input state, there is only one such an endpoint $v_0$ of process $p_0$ and one endpoint $v_1$ of process $p_1$.

More formally, in a subdivision of the standard simplex composed of vertices $p_1$ and $p_2$, there are a single vertex with a carrier equal to $p_1$ or $p_2$ relatively to the standard simplex. Indeed, a subdivision replaces simplices with the subdivision of the face with the same carrier. Therefore, as endpoints are of dimension $0$, they are replaced with a single vertex sharing the same carrier. These are the vertices we call $v_0$ and $v_1$ in any given affine task.

We can then identify the following classes of $2$-process affine tasks:
\begin{enumerate}
\item There is a path from $v_0$ to $v_1$ ($v_0$ and $v_1$ are simply connected).
\item No path between $v_0$ to $v_1$, but $v_0$ and $v_1$ belong to the task.
\item No path between $v_0$ to $v_1$, but only $v_0$ belong to the task.
\item No path between $v_0$ to $v_1$, but only $v_1$ belong to the task.
\item No path between $v_0$ to $v_1$, and neither $v_0$ nor $v_1$ belong to the task.
\end{enumerate}

Let us now show that these classes are disjoint and form a partition of all $2$-process affine tasks. We also show afterward that computing the class of a $2$-process affine task is decidable.

\begin{theorem}
The set of $2$-processes affine tasks classes form a disjoint partition of the set of $2$-process affine tasks. 
\end{theorem}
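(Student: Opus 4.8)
The plan is to reduce the statement to elementary Boolean reasoning over three well-defined predicates attached to each $2$-process affine task $\L$. Recall that $\L$ is a pure non-empty $1$-dimensional sub-complex of $\Chr^{\ell}\s^1$, and that $\Chr^{\ell}\s^1$ is itself a linear subdivision of the edge $\s^1$, i.e.\ a path $v_0 = u_0, u_1, \dots, u_m = v_1$ whose two endpoints are exactly the unique vertices of carrier $\{p_0\}$ and $\{p_1\}$. First I would record the well-definedness of $v_0$ and $v_1$: since a subdivision replaces each face by the subdivision of the face sharing its carrier, and the endpoints of $\s^1$ are $0$-dimensional, their subdivisions are single vertices; hence $v_0$ and $v_1$ are uniquely determined, and $v_0 \neq v_1$ because they carry distinct colors.

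With $v_0, v_1$ fixed, I would define three predicates on $\L$: $a := [v_0 \in \L]$, $b := [v_1 \in \L]$, and $c := [\text{there is a path from } v_0 \text{ to } v_1 \text{ in } \L]$. Each is an unambiguous property of $\L$: membership is decided directly, and because $\Chr^{\ell}\s^1$ is linear, connectivity from $v_0$ to $v_1$ reduces to whether $\L$ contains every edge $u_i u_{i+1}$ of the path. The five classes are then exactly the value assignments $c$; $\neg c \wedge a \wedge b$; $\neg c \wedge a \wedge \neg b$; $\neg c \wedge \neg a \wedge b$; and $\neg c \wedge \neg a \wedge \neg b$.

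The crucial structural observation is the implication $c \Rightarrow a \wedge b$: a path from $v_0$ to $v_1$ necessarily contains both endpoints, so Class~1 is consistent with both endpoints being present. Given this, disjointness is immediate, as the five assignments differ in the truth value of at least one of $a, b, c$; and exhaustiveness follows by case analysis on $c$: if $c$ holds the task lies in Class~1, and if $\neg c$ holds the four combinations of $(a,b)$ place the task into exactly one of Classes~2--5. No assignment is left uncovered and none is counted twice, so each task belongs to exactly one class. Nonemptiness of every class is witnessed by the canonical representatives depicted in Figure~\ref{Dim1OrderSCLC} (the full path, the two single-endpoint edges, the disconnected endpoints, and the interior edge), which completes the verification that the classes form a genuine partition.

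I do not anticipate a deep obstacle here; the only point requiring care is the set-up rather than the combinatorics. Specifically, I would make sure that ``path'' is interpreted as connectivity within the $1$-complex $\L$, and that the linearity of $\Chr^{\ell}\s^1$ is used to guarantee that $c$ is a genuine function of $\L$, so that, e.g., the combination $\neg c \wedge a \wedge b$ is realizable — the two endpoints present but the intervening path broken. Once the three predicates are pinned down, the partition is purely logical.
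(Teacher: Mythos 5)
Your proposal is correct and follows essentially the same route as the paper: both arguments reduce the five classes to Boolean formulas over the three predicates (path-connectivity of $v_0,v_1$ and membership of each endpoint), then verify pairwise exclusivity and exhaustiveness by case analysis; your added observations (well-definedness of $v_0,v_1$, the implication that a path forces both endpoints to be present, and nonemptiness via the canonical representatives) are sound refinements but not a different method.
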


\begin{proof}
Classes are defined based on $3$ properties: (1) $P_1$: $v_0$ and $v_1$ are simply connected; (2) $P_2$: $v_0$ belong to the task; and (3) $P_3$: $v_1$ belong to the task. We can reformulate the classes according to these properties as follows:
\begin{enumerate}
\item tasks satisfying $F_1=P_1$.
\item tasks satisfying $F_2=\neg P_1\wedge P_2 \wedge P_3$.
\item tasks satisfying $F_3=\neg P_1\wedge P_2 \wedge \neg P_3$.
\item tasks satisfying $F_4=\neg P_1\wedge \neg P_2 \wedge P_3$.
\item tasks satisfying $F_5=\neg P_1\wedge \neg P_2 \wedge \neg P_3$.
\end{enumerate}
To check that it forms a partition, we need to check that (1) any two formulas cannot be both satisfied at the same time, and (2) there is always a satisfied formula given any state of the properties. Hence that $\forall i,j\in \{1,\dots,5\}, i\neq j, F_i\wedge F_j=\bot$ and that $F_1\vee F_2\vee F_3\vee F_4 \vee F_5=\top$. For the former case, we leave the reader to check all cases thoroughly, and we just note that among two formulas, a property is flipped and must be satisfied in one and not satisfied in the other. For the latter, we simply point out that $F_1$ corresponds to $P_1$ and all others correspond to a conjunction of $\neg P_1$ with one of four possibilities of $P_2$ and $P_3$ satisfiability.
\end{proof}

\begin{theorem}
Computing the class of a $2$-process affine task is decidable.
\end{theorem}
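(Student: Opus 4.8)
The plan is to reduce the computation of the class to deciding the three underlying predicates $P_1$, $P_2$, $P_3$ on the finite combinatorial object $\L$, since by the partition established in the previous theorem the class is then read off from the Boolean formulas $F_1,\dots,F_5$. An affine task $\L$ is given as a finite pure $1$-dimensional subcomplex of $\Chr^\ell \s^1$ for some explicit $\ell$; in particular $\L$ is presented as a finite list of vertices and edges, so every question about it is a finite computation. It therefore suffices to exhibit, for each predicate, a terminating procedure.

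First I would locate the two endpoints $v_0$ and $v_1$. As observed in the ``Property selection'' paragraph, $v_0$ and $v_1$ are the unique vertices of $\Chr^\ell \s^1$ whose carrier in $\s^1$ equals $\{p_0\}$ and $\{p_1\}$ respectively; both are computable directly from the description of $\Chr^\ell\s^1$, independently of $\L$. Predicates $P_2$ and $P_3$ then reduce to the membership tests $v_0 \in \mathit{Vert}(\L)$ and $v_1 \in \mathit{Vert}(\L)$, each decidable by scanning the finite vertex set of $\L$.

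For $P_1$ I would exploit the special structure of the ambient complex in dimension $1$. The complex $\Chr^\ell \s^1$ is a subdivided edge, i.e.\ a simple path whose two ends are exactly $v_0$ and $v_1$; this follows by induction on $\ell$, since $\Chr$ applied to an edge produces a path of three edges and preserves the carriers of the two extremities. Consequently $\L$, being a subcomplex and pure of dimension $1$, is a disjoint union of subpaths, and there is a path from $v_0$ to $v_1$ inside $\L$ if and only if $\L$ contains every edge of $\Chr^\ell \s^1$ (equivalently $\L = \Chr^\ell \s^1$), because in a simple path the unique route between the two ends is the whole path. Either phrasing yields a decision procedure: test edge-by-edge whether all $3^\ell$ edges lie in $\L$, or simply run a breadth-first search from $v_0$ in the finite graph $\L$ and check whether $v_1$ is reached. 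Combining the outcomes of the three tests through $F_1,\dots,F_5$ outputs the class, and since every step terminates, the procedure is an algorithm.

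The only genuine content---as opposed to routine finite search---is the structural claim that $\Chr^\ell \s^1$ is a single path with endpoints $v_0,v_1$, which simultaneously justifies the uniqueness of $v_0,v_1$ and the reduction of connectivity to reachability on a finite graph. I expect this to be the main (and rather modest) obstacle: once it is in place, decidability of $P_1,P_2,P_3$, and hence of the class, is immediate. No higher-dimensional or topological difficulty arises, precisely because in dimension $1$ connectivity and simple connectivity coincide on subgraphs of a tree.
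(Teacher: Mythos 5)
Your proposal is correct and follows essentially the same route as the paper: identify $v_0$ and $v_1$ via their carriers, decide $P_2$ and $P_3$ by finite membership tests, decide $P_1$ by a graph search on the finite complex, and read off the class from the Boolean formulas. The additional structural observation that $\Chr^\ell\s^1$ is a simple path (so that $P_1$ holds iff $\L$ is the whole subdivision) is a correct refinement the paper does not need, since the breadth-first search you also propose already suffices.
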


\begin{proof}
For this result, we only need to check that computing whether the properties $P_1$, $P_2$ and $P_3$ are satisfied is a decidable question. Identifying vertices $v_0$ and $v_1$ is trivial as computing the carrier of a vertex is part of the subdivision definition, it corresponds to a simple inclusion test. For property $P_1$, we only need to execute a graph search to check whether $v_0$ and $v_1$ are simply-connected or not.
\end{proof}

\subparagraph*{Tasks in the same class are equivalent.} Let us show that for any couple of tasks $A$ and~$B$ in the same class, $A^*$ solves $B$. This way we show that all tasks in the same class are equivalent to each other:

\begin{theorem}\label{thm:equiv}
Tasks in the same class are equivalent to each other.
\end{theorem}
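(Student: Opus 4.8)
The plan is to reduce the semantic claim to a purely combinatorial one and then settle it by an elementary case analysis over the five classes. By the solvability definition recalled in the excerpt, showing that $A^*$ solves $B$ amounts to exhibiting, for some $b\in\mathbb{N}$, a color-preserving simplicial map $\phi\colon A^b(\s)\to B$ carried by $\Delta$. Unwinding $\Delta(\sigma)=B\cap\Chr^{\ell}(\sigma)$ in dimension $1$, the carrier of a vertex of $A^b(\s)$ with respect to $\s$ is $\{p_0\}$ only for $v_0$, $\{p_1\}$ only for $v_1$, and $\{p_0,p_1\}$ for every other (``interior'') vertex. Hence being carried by $\Delta$ imposes exactly two constraints, $\phi(v_0)=v_0$ whenever $v_0\in A^b(\s)$ (which forces $v_0\in B$) and $\phi(v_1)=v_1$ whenever $v_1\in A^b(\s)$ (forcing $v_1\in B$), while interior vertices are unconstrained. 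Since the statement is symmetric in $A$ and $B$, proving this single direction also yields that $B^*$ solves $A$, giving equivalence.

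Next I would record the one-dimensional geometry. Because $\s$ is an edge, each $\Chr^{\ell}\s$ is a path graph whose ends are $v_0$ and $v_1$ and whose colors alternate along it; consequently any pure subcomplex $A$ is a disjoint union of sub-paths, and a color-preserving simplicial map between such complexes is exactly a color-preserving graph homomorphism (it sends edges to edges and never collapses). Two facts drive the construction. First, by induction on $b$, the operator preserves the relevant features: $v_0\in A^b(\s)$ iff $v_0\in A$, likewise for $v_1$, and if $v_0,v_1$ lie in distinct components of $A$ they lie in distinct components of $A^b(\s)$ (iteration only refines inside existing edges, so a separating gap is never filled). Second, any alternating path can be \emph{folded} onto a single edge $(x_0,x_1)$ of $B$ of matching end-colors by sending its vertices alternately to $x_0$ and $x_1$; in particular a path with a prescribed endpoint can be folded so that this endpoint maps to any chosen endpoint of the target edge.

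With these in hand the case analysis is short. In class $5$ neither endpoint lies in $A$ (nor in $B$), so there are no carrier constraints and I fold all of $A^1(\s)=A$ onto an arbitrary edge of $B$. In classes $3$ and $4$ exactly one endpoint is present; I fold the component of that endpoint onto an incident edge of $B$ (which exists since $B$ is pure), sending $v_0\mapsto v_0$ (resp.\ $v_1\mapsto v_1$), and fold the remaining components arbitrarily. In class $2$ both endpoints are present but, crucially, $v_0$ and $v_1$ lie in \emph{different} components of $A^b(\s)$, matching the fact that they lie in different components of $B$; I fold $v_0$'s component onto an edge of $B$ at $v_0$, $v_1$'s onto an edge at $v_1$, and the rest arbitrarily. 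In class $1$ each task equals a full subdivision (a $v_0$--$v_1$ path in the ambient path graph forces every edge present), so $A^b(\s)=\Chr^{\ell_A b}\s$ is a single alternating path from $v_0$ to $v_1$ and $B=\Chr^{\ell_B}\s$ is such a path with $3^{\ell_B}$ edges; a color-preserving homomorphism fixing both endpoints is a $\pm1$ walk of length $3^{\ell_A b}$ from $v_0$ to $v_1$ on the target path, which exists once $b$ is large enough that $3^{\ell_A b}\ge 3^{\ell_B}$ (the parities already agree, both lengths being odd).

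I expect the main obstacle to be conceptual rather than computational: recognizing that \emph{connectivity of the endpoints} is the only genuine obstruction and that it is exactly what the class partition records. Concretely, the image of a connected complex under a simplicial map is connected, so a connected $A^b(\s)$ can never realize $\phi(v_0)=v_0$ and $\phi(v_1)=v_1$ into a disconnected $B$, which is precisely why class $1$ and class $2$ must be kept apart. Making this rigorous requires the inductive lemma that iteration preserves both endpoint-membership and endpoint-separation; once that is established, the folding maps and the class-$1$ walk are routine.
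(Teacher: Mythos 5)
Your proof is correct, and for classes $2$--$5$ it is essentially the paper's argument: decompose $A$ into connected components, send the component of $v_0$ (resp.\ $v_1$), when present, onto a facet of $B$ containing $v_0$ (resp.\ $v_1$), and send every all-interior component onto an arbitrary facet; your ``folding'' is just an explicit description of the color-preserving map onto a single edge that the paper leaves implicit. Where you genuinely diverge is class $1$. The paper builds a carrier-preserving continuous map from $|A|$ to $|B|$ along the two $v_0$--$v_1$ paths and then invokes the convergence algorithm (Theorem~\ref{thm:conv}), using the fact that pure one-dimensional complexes are automatically link-connected, to discretize it into a simplicial map from $\Chr^N(A)$ for some $N$. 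You instead observe that in a path graph the unique $v_0$--$v_1$ path uses every edge, so a class-$1$ task is forced to equal the full subdivision $\Chr^{\ell}\s$; the problem then collapses to exhibiting a color-preserving walk of length $3^{\ell_A b}$ between the endpoints of a path of length $3^{\ell_B}$, which a length-and-parity count settles. Your route is more elementary and self-contained (no appeal to the convergence theorem, and the required iteration count $b$ is explicit), while the paper's route is the one that carries the structural content the authors care about: link-connectedness is exactly the hypothesis of the convergence theorem and is the condition they single out in the introduction as the candidate for extending the decidability result beyond two processes. Both arguments, it is worth noting, tacitly assume $\ell\geq 1$ so that a class-$1$ task is a genuine subdivision rather than $\s$ itself.
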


\begin{proof}
Let us first consider the case of the first class where a path exists between $v_0$ and~$v_1$. Consider two affine tasks $A$ and $B$ belonging to this class. The existence of a path translates to the geometric realization of both tasks. Note that the carrier of elements of these path are equal to $\{p_0,p_1\}$ except for the endpoints $v_0$ and $v_1$. Therefore we can map the continuous path from $|A|$ to the path of $|B|$ in a carrier-preserving manner. Lastly, affine tasks of dimension $1$ are always link-connected as the link is not-empty for pure complexes. Thus, we can apply the convergence algorithm to obtain a simplicial map from sufficiently many subdivisions of $A$ to $B$. In particular, $A$ is a subset of a subdivision; thus, we obtain a map from some iteration of $A$ to $B$ by a carrier and color-preserving map. It completes the proof that $A^*$ solves the simplex agreement task on $B$. Hence, all tasks in the first class are equivalent to each other.

Let us now look at the remaining cases altogether. Consider two affine tasks $A$ and $B$ in the same class. Let us split the simplices of $A$ in each task by connected components. Note that, since there is no paths from $v_0$ to $v_1$, there is no connected component including both $v_0$ and $v_1$. Depending on whether $v_0$, resp. $v_1$, belong to the task or not, there is a connected component including $v_0$, resp. $v_1$. Such a component can be mapped to the facet of $B$ containing $v_0$, resp. $v_1$. Indeed, if there is such a connected component for $A$, then as $B$ belongs to the same class of affine task, it must possess such a facet. We are left with connected components containing neither $v_0$ nor $v_1$, hence with simplices with a carrier equal to $\{p_0,p_1\}$ that can be mapped to the simplices of any facet of $B$ in a carrier and color-preserving way. Hence, $A^*$ solves $B$, and we have shown that all tasks in this class are equivalent to each other for any class.\end{proof}

\subparagraph*{Canonical tasks.}
In each equivalence class, we can then select a characterizing representative, which we call a \emph{canonical task}. 
We will show that the partial order on these canonical tasks (provided in Figure~\ref{Dim1OrderSCLC}) captures the relative power of the equivalence classes. Figure~\ref{Dim1List} provides the set of canonical affine tasks. The affine model of a canonical task is also called canonical.

\section{Comparing equivalence classes}

To show that our selected partial order corresponds to affine models relative task computability power, we need to show that: (1) iterations of affine tasks cannot increase the associated class; (2) carrier-preserving simplicial maps can only send tasks to tasks in smaller or equal classes; and (3) canonical affine models follow this order.
It is easy to check that (1) and (2) implies that equivalent affine models belong to the same class. As all models in a class are equivalent, comparing canonical tasks (3) is, hence, sufficient to compare all models. Moreover, (1) and (2) also imply that models in a class cannot solve tasks in higher classes; consequently, (3) reduces to showing that a higher canonical model is stronger than a smaller~one.

\begin{figure}
\captionsetup[subfigure]{justification=centering}
  \begin{minipage}[b]{.19\linewidth}
    \centering
		\begin{tikzpicture}[scale=2]
			\draw[gray] (0.000000,0.000000) -- (0.333333,0.000000);
			\draw[black,ultra thick] (0.333333,0.000000) -- (0.666667,0.000000);
			\draw[gray] (0.666667,0.000000) -- (1.000000,0.000000);
			\draw (0.000000,0.000000) node[circle,inner sep=0pt,minimum size=5pt, fill=red,draw,thick] {} ;
			\draw (0.666667,0.000000) node[circle,inner sep=0pt,minimum size=5pt, fill=red,draw,thick] {} ;
			\draw (0.333333,0.000000) node[circle,inner sep=0pt,minimum size=5pt, fill=blue,draw,thick] {} ;
			
			\draw (1.000000,0.000000) node[circle,inner sep=0pt,minimum size=5pt, fill=blue,draw,thick] {} ;
		\end{tikzpicture}
    \subcaption{No $v_1$, no $v_2$}\label{fig:S000}
  \end{minipage}
  \hspace{0.001\linewidth}
  \begin{minipage}[b]{.19\linewidth}
    \centering
		\begin{tikzpicture}[scale=2]
			\draw[black,ultra thick] (0.000000,0.000000) -- (0.333333,0.000000);
			\draw[gray] (0.333333,0.000000) -- (0.666667,0.000000);
			\draw[gray] (0.666667,0.000000) -- (1.000000,0.000000);
			\draw (0.000000,0.000000) node[circle,inner sep=0pt,minimum size=5pt, fill=red,draw,thick] {} ;
			\draw (0.666667,0.000000) node[circle,inner sep=0pt,minimum size=5pt, fill=red,draw,thick] {} ;
			\draw (0.333333,0.000000) node[circle,inner sep=0pt,minimum size=5pt, fill=blue,draw,thick] {} ;
			
			\draw (1.000000,0.000000) node[circle,inner sep=0pt,minimum size=5pt, fill=blue,draw,thick] {} ;
		\end{tikzpicture}
    \subcaption{Only $v_1$}\label{fig:S100}
  \end{minipage}
  \hspace{0.001\linewidth}
  \begin{minipage}[b]{.19\linewidth}
    \centering
		\begin{tikzpicture}[scale=2]
			\draw[gray] (0.000000,0.000000) -- (0.333333,0.000000);
			\draw[gray] (0.333333,0.000000) -- (0.666667,0.000000);
			\draw[black,ultra thick] (0.666667,0.000000) -- (1.000000,0.000000);
			\draw (0.000000,0.000000) node[circle,inner sep=0pt,minimum size=5pt, fill=red,draw,thick] {} ;
			\draw (0.666667,0.000000) node[circle,inner sep=0pt,minimum size=5pt, fill=red,draw,thick] {} ;
			\draw (0.333333,0.000000) node[circle,inner sep=0pt,minimum size=5pt, fill=blue,draw,thick] {} ;
			
			\draw (1.000000,0.000000) node[circle,inner sep=0pt,minimum size=5pt, fill=blue,draw,thick] {} ;
		\end{tikzpicture}
    \subcaption{Only $v_2$}\label{fig:S010}
  \end{minipage}
  \hspace{0.001\linewidth}
  \begin{minipage}[b]{.19\linewidth}
    \centering
		\begin{tikzpicture}[scale=2]
			\draw[black,ultra thick] (0.000000,0.000000) -- (0.333333,0.000000);
			\draw[gray] (0.333333,0.000000) -- (0.666667,0.000000);
			\draw[black,ultra thick] (0.666667,0.000000) -- (1.000000,0.000000);
			\draw (0.000000,0.000000) node[circle,inner sep=0pt,minimum size=5pt, fill=red,draw,thick] {} ;
			\draw (0.666667,0.000000) node[circle,inner sep=0pt,minimum size=5pt, fill=red,draw,thick] {} ;
			\draw (0.333333,0.000000) node[circle,inner sep=0pt,minimum size=5pt, fill=blue,draw,thick] {} ;
			
			\draw (1.000000,0.000000) node[circle,inner sep=0pt,minimum size=5pt, fill=blue,draw,thick] {} ;
		\end{tikzpicture}
    \subcaption{$v_1$ and $v_2$}\label{fig:S110}
  \end{minipage}
  \hspace{0.001\linewidth}
   \begin{minipage}[b]{.19\linewidth}
    \centering
		\begin{tikzpicture}[scale=2]
			\draw[black,ultra thick] (0.000000,0.000000) -- (0.333333,0.000000);
			\draw[black,ultra thick] (0.333333,0.000000) -- (0.666667,0.000000);
			\draw[black,ultra thick] (0.666667,0.000000) -- (1.000000,0.000000);
			\draw (0.000000,0.000000) node[circle,inner sep=0pt,minimum size=5pt, fill=red,draw,thick] {} ;
			\draw (0.666667,0.000000) node[circle,inner sep=0pt,minimum size=5pt, fill=red,draw,thick] {} ;
			\draw (0.333333,0.000000) node[circle,inner sep=0pt,minimum size=5pt, fill=blue,draw,thick] {} ;
			
			\draw (1.000000,0.000000) node[circle,inner sep=0pt,minimum size=5pt, fill=blue,draw,thick] {} ;
		\end{tikzpicture}
    \subcaption{Connected}\label{fig:S111}
  \end{minipage}

\caption{Representative affine tasks for distinct values of $S$.\label{Dim1List}}
\end{figure}
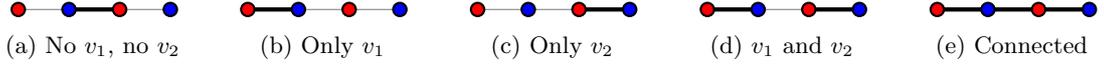

\subparagraph*{Decreasing through iterations.} Let us first show that iterating an affine task can only make it belong to a weaker class. In practice, we show that the task remains in the same class. For this we show that the properties selected for the class selection are stable under iterations, that is:

\begin{theorem}\label{thm:iter}
For any $2$-process affine task $A$ and any $k\in\mathbb{N}$, $\mathit{class}(A^k)\leq\mathit{class}(A)$.
\end{theorem}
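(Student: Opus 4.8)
The plan is to reduce the statement to a combinatorial claim about unions of edges of a path, and then to show that each of the three class-defining predicates $P_1$, $P_2$, $P_3$ is invariant under iteration, so that in fact $\mathit{class}(A^k)=\mathit{class}(A)$. First I would use the fact that for two processes $\s$ is a single edge, hence $\Chr^{m}\s$ is a subdivided interval, i.e.\ a path whose $3^{m}$ edges are linearly ordered from the unique solo vertex $v_0$ (carrier $\{p_0\}$) to the unique solo vertex $v_1$ (carrier $\{p_1\}$). A $2$-process affine task $A\subseteq\Chr^{\ell}\s$ is then just a nonempty union of some of these edges together with their vertices, and the three predicates become transparent: $P_2$ (resp.\ $P_3$) holds iff $A$ contains the leftmost (resp.\ rightmost) edge, and $P_1$ holds iff $A$ contains \emph{every} edge, since in a linearly ordered path the only route from $v_0$ to $v_1$ uses all intermediate edges. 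Thus $\mathit{class}(A)$ is fixed by the triple (``leftmost edge present'', ``rightmost edge present'', ``full path''), and it suffices to show that these three facts survive the passage from $A$ to $A^{k}$.

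For $P_2$ and $P_3$ I would induct on $k$ using the recursive unfolding $A^{k}=A(A^{k-1})$, under which every facet of $A^{k-1}$ is refined by a color-preserving copy of $A$. The crux is a corner-tracking argument: the global vertex $v_0$ belongs to $A^{k}$ exactly when (i) the leftmost facet of $A^{k-1}$ is already incident to $v_0$, i.e.\ $v_0\in A^{k-1}$, and (ii) the copy of $A$ placed inside that facet contains its own leftmost edge, i.e.\ $v_0\in A$. Indeed, the color-preserving identification sends the solo-$p_0$ vertex of the copy onto the global corner $v_0$, and that corner is covered only if the copy leaves no gap there. This gives $v_0\in A^{k}\iff(v_0\in A^{k-1})\wedge(v_0\in A)$, and induction (base case $A^{1}=A$) yields $v_0\in A^{k}\iff v_0\in A$, i.e.\ $P_2(A^{k})=P_2(A)$; the case of $v_1$ and $P_3$ is symmetric.

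For $P_1$ I would prove that $A^{k}$ equals the full subdivision $\Chr^{\ell k}\s$ iff $A$ equals the full subdivision $\Chr^{\ell}\s$. If $A=\Chr^{\ell}\s$ then iterating reproduces $\Chr^{\ell k}\s$; conversely, if $A$ omits even one edge of $\Chr^{\ell}\s$, then every facet of $A^{k-1}$ is refined by a copy of $A$ that again omits the matching sub-edge, so $A^{k}$ is never the full path. Hence $P_1(A^{k})=P_1(A)$. Combining the three invariances, $A$ and $A^{k}$ satisfy exactly the same predicates and therefore lie in the same class, which gives $\mathit{class}(A^{k})\leq\mathit{class}(A)$ (indeed with equality). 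The only delicate step is the corner-tracking claim: one must verify that carriers compose correctly under the iterated subdivision operator, so that the surviving solo vertex of each nested copy is genuinely identified with the global $v_0$ (or $v_1$), and that an interior gap of $A$ can never be healed by further refinement. Once the path-graph picture is established, the rest is routine.
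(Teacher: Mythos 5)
Your proposal is correct and follows the same overall plan as the paper: show that each of the predicates $P_1$, $P_2$, $P_3$ is preserved when passing from $A$ to $A^k$, then conclude from the monotonicity of the class in these predicates. Your treatment of $P_2$ and $P_3$ --- tracking the solo corner through the nested copies via the recursion $A^k=A(A^{k-1})$ --- is essentially the paper's argument (``a vertex with carrier equal to $p_0$ or $p_1$ is replaced with a vertex of the same carrier''), just made more explicit. Where you genuinely diverge is $P_1$: the paper lifts a path edge-by-edge (each edge of a $v_0$--$v_1$ path in $A^{k-1}$ is replaced by a copy of $A$ that itself contains a path between the corresponding endpoints, and these concatenate), whereas you first observe that for two processes $\Chr^{\ell}\s$ is a linearly ordered path, so that $P_1$ is equivalent to $A$ being the \emph{full} subdivision, and then show that fullness is preserved in both directions. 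Your route buys the stronger conclusion $\mathit{class}(A^k)=\mathit{class}(A)$ and makes the combinatorics completely transparent, at the cost of being specific to dimension $1$; the paper's concatenation argument only yields the forward implication $P_1(A)\Rightarrow P_1(A^k)$, which suffices for the stated inequality, but is the version with some hope of generalizing to more processes. Both arguments rest on the same delicate point you rightly flag: the chromatic identification sends the solo vertices of each nested copy onto the correspondingly colored endpoints of the facet it refines, so carriers compose correctly under iteration.
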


\begin{proof}
Let us show that properties $P_1$, $P_2$ and $P_3$ are stable under iterations. Let us start with $P_2$ and $P_3$. When iterating, a simplex is replaced with the face corresponding to its colors. Hence, a vertex with carrier equal to $p_0$ or $p_1$ is replaced with a carrier of the same carrier. Therefore, $P_2$ is satisfied for an iteration if and only if it is satisfied with the original task. The same holds for $P_3$. 

Let us now look at the property $P_1$. If $P_1$ is satisfied for a given affine task, then there exist a sequence of vertices, such that each couple of successive vertices form a vertex in the task and with the first vertex with a carrier equal to $p_0$ and the last with a carrier equal to $p_1$. When iterating, each simplex of consecutive vertices is replaced by the affine task, including a path between the end vertices. Therefore, by a trivial induction, we obtain a path between the end vertices of the original sequence. It corresponds to the vertices with carrier equal to $p_0$ and $p_1$, respectively. Therefore the existence of a path is stable under iterations. 

We have shown that properties $P_1$, $P_2$ and $P_3$ are stable under iterations. The negation is not necessary, indeed, a task satisfying $P_1$ is weaker that a task not satisfying $P_1$. Moreover, when $P_1$ is not satisfied, a task satisfying more of the properties $P_2$ or $P_3$ is weaker.
\end{proof}

\subparagraph*{Decreasing through simplicial maps.} 
Our goal is to show that the affine model relative order is identical to the class order that we have defined. We start by showing that a task in some class can only solve a task in a weaker or equal class. We have shown that it is true for iterations, but solvability also considers the projection through a carrier and color-preserving map. Hence, let us show that such maps can only reduce the class of the task: 

\begin{theorem}\label{thm:map}
For any $2$-process affine task $A$ and any carrier and color-preserving map $\delta$, $\mathit{class}(\delta(A))\leq\mathit{class}(A)$.
\end{theorem}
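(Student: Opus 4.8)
The plan is to track how the three defining properties $P_1$, $P_2$, $P_3$ behave under $\delta$ and to argue that $\delta$ can never raise the class in the order of Figure~\ref{Dim1OrderSCLC}. First I would record that, since color-preserving maps are non-collapsing, $\delta$ sends every $1$-simplex to a $1$-simplex; hence the image $\delta(A)$ is again a pure $1$-dimensional chromatic sub-complex of a subdivision of $\s$. In particular it has its own well-defined endpoints $v_0'$ and $v_1'$, namely the unique vertices of carrier $\{p_0\}$ and $\{p_1\}$ in the target subdivision, so $\mathit{class}(\delta(A))$ is well defined and the classification of the partition theorem applies to it.

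Next I would show that $P_2$ and $P_3$ are preserved \emph{exactly}. Because $\delta$ is carrier- and color-preserving, $\delta(v_0)$ is a vertex of carrier $\{p_0\}$, hence $\delta(v_0)=v_0'$; thus $v_0\in A$ forces $v_0'\in\delta(A)$. Conversely, any vertex of $\delta(A)$ of carrier $\{p_0\}$ is $\delta(x)$ for some $x\in A$ with $\carrier(x,\s)=\{p_0\}$, whence $x=v_0\in A$. So $\delta(A)$ satisfies $P_2$ if and only if $A$ does, and symmetrically for $P_3$. For $P_1$ I would again invoke non-collapsing: a path $v_0=u_0,\dots,u_m=v_1$ in $A$ maps to a walk $v_0'=\delta(u_0),\dots,\delta(u_m)=v_1'$ in $\delta(A)$, since each edge $\{u_i,u_{i+1}\}$ has a $1$-dimensional image; a walk contains a path, so $A$ satisfying $P_1$ implies $\delta(A)$ satisfies $P_1$. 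The key asymmetry is that $P_1$ can only be \emph{created}, never destroyed.

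Finally I would combine these facts with the order. A path between the two endpoints presupposes both endpoints, so $P_1\Rightarrow P_2\wedge P_3$; consequently $\delta(A)$ can acquire $P_1$ only when $A$ already lies in class $F_2$, and in that case $\delta(A)$ falls into $F_1$, which sits strictly below $F_2$. In every other situation the values of $P_2$ and $P_3$ are frozen and $P_1$ remains false, so $\delta(A)$ stays in the same class as $A$. In all cases $\mathit{class}(\delta(A))\leq\mathit{class}(A)$. The main obstacle is keeping the directions of implication straight: one must check that $P_2,P_3$ are two-sided while $P_1$ is only one-sided, and that this one-sidedness is precisely what forces the class to be non-increasing — i.e.\ that ``more connectivity'' and ``more endpoints present'' correspond to lower classes in Figure~\ref{Dim1OrderSCLC}, exactly as in the stability argument of Theorem~\ref{thm:iter}.
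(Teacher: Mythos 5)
Your core argument---forward preservation of $P_1$, $P_2$, $P_3$ under a carrier- and color-preserving map, followed by monotonicity of the class in these properties---is exactly the paper's route, and that part is sound. However, the extra ``exactness'' claims you add are wrong, and your final case analysis leans on them. You assert that any vertex of $\delta(A)$ with carrier $\{p_0\}$ must be the image of a vertex of $A$ with carrier $\{p_0\}$. That does not follow: carrier preservation here (as the paper itself uses it, and as is forced by $\delta$ being carried by $\Delta(\sigma)=\L\cap\Chr^{\ell}(\sigma)$) only guarantees $\carrier(\delta(x),\s)\subseteq\carrier(x,\s)$, so a vertex of color $p_0$ with full carrier $\{p_0,p_1\}$ may perfectly well be sent to $v_0'$. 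Hence $P_2$ and $P_3$ are \emph{not} ``frozen'': an interior path in $\Chr^2\s$ touching neither endpoint (class $5$) can be mapped onto the entire path of $\Chr\s$, landing in class $1$. This also falsifies your claim that $\delta(A)$ can acquire $P_1$ only when $A$ is in $F_2$, and with it the sentence ``in every other situation \dots\ $\delta(A)$ stays in the same class as $A$.''

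Fortunately all of these errors overshoot in the harmless direction: properties can only be \emph{gained} under $\delta$, never lost, and gaining properties only moves a task down (or keeps it fixed) in the order of Figure~\ref{Dim1OrderSCLC}. So the inequality $\mathit{class}(\delta(A))\leq\mathit{class}(A)$ already follows from the one-sided preservation statements you did establish correctly (image of a path is a walk containing a path; $\delta(v_0)=v_0'$ when $v_0\in A$; and similarly for $v_1$), exactly as in the paper and as in the stability argument of Theorem~\ref{thm:iter}. The fix is simply to delete the converse claims and replace the final case analysis by the observation that the set of satisfied properties can only grow, and that a larger property set never yields a higher class.
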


\begin{proof}
Let us start with the more straightforward case: properties $P_2$ and $P_3$ are conserved through a carrier-preserving simplicial map. It is a direct result of the carrier-preserving notion, as the image of the vertex with carrier $p_1$ or $p_2$ must be mapped to a vertex with a smaller or equal carrier, hence in this case, the same carrier. Therefore, if an affine task satisfies the property $P_2$ or the property $P_3$, then its image does too.

Now, let us look at the property $P_1$ and show that it is also stable under carrier-preserving simplicial mapping. It results from the fact that the image of a path is a path, and that the endpoints with a carrier of size $1$ must be mapped to the vertices with the same carrier. Therefore, as for the proof of Theorem~\ref{thm:iter}, the conservation of property $P_1$, $P_2$ and $P_3$ implies that the image must be a task with a smaller or equal class.
\end{proof}

\subparagraph*{Comparing canonical models.} 
 If neither $v_0$ nor $v_1$ belongs to the task, we can map all facets to any other task facet. Hence, this canonical task is stronger than all.
For other canonical tasks, the order follows a direct task inclusion, which implies the solvability of canonical tasks in smaller classes (the solution being the identity map).

Let us show that all our partial results combine properly to show that the class order corresponds to the affine model relative computability order.
\begin{theorem}
Given two affine tasks $A$ and $B$, $A^*\preceq_\mathcal{A} B^*$ if and only if $\mathit{class}(A)\leq \mathit{class}(B)$.
\end{theorem}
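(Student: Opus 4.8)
The plan is to use the reduction recorded in the introduction: $A^*\preceq_\mathcal{A} B^*$ holds exactly when $B^*$ solves $A$, i.e.\ when there exist $b\in\mathbb{N}$ and a color-preserving simplicial map $\delta\colon B^b\to A$ carried by $\Delta_A$. Under this reformulation the two directions separate cleanly: the ``if'' direction follows from Theorem~\ref{thm:equiv} and the comparison of canonical models, while the ``only if'' direction follows from Theorems~\ref{thm:iter} and~\ref{thm:map}.

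For the ``if'' direction I would assume $\mathit{class}(A)\leq\mathit{class}(B)$ and pass to the canonical representatives $\bar A,\bar B$ of the two classes. By Theorem~\ref{thm:equiv}, $A^*\equiv\bar A^*$ and $B^*\equiv\bar B^*$, so it suffices to show $\bar A^*\preceq_\mathcal{A}\bar B^*$. Here I would invoke the canonical-model comparison stated just above: if $\bar B$ is the class-$5$ representative (the single central edge), then $\bar B^b$ is again a single edge of carrier $\{p_0,p_1\}$ and maps onto an arbitrary facet of $\bar A$; otherwise $\mathit{class}(\bar A)\leq\mathit{class}(\bar B)$ forces the inclusion of complexes $\bar B\subseteq\bar A$ inside $\Chr\s$, so the identity embedding is carried by $\Delta_{\bar A}$ and solves $\bar A$ already with $b=1$. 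Either way $\bar B^*$ solves $\bar A$, hence $A^*\preceq_\mathcal{A} B^*$.

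For the ``only if'' direction I would assume $A^*\preceq_\mathcal{A} B^*$ and fix a color-preserving map $\delta\colon B^b\to A$ carried by $\Delta_A$. The key object is the image $C:=\delta(B^b)$, a pure $1$-dimensional subcomplex of $A$ (it is pure and $1$-dimensional because $\delta$ is non-collapsing). The argument then has two halves. First, since $C\subseteq A$, any defining property holding for $C$ is inherited by $A$ --- a vertex $v_0$ or $v_1$ of $C$ lies in $A$, and a $v_0$--$v_1$ path in $C$ is one in $A$ --- which amounts to $\mathit{class}(A)\leq\mathit{class}(C)$. Second, Theorem~\ref{thm:map} applied to $\delta$ gives $\mathit{class}(C)\leq\mathit{class}(B^b)$ and Theorem~\ref{thm:iter} gives $\mathit{class}(B^b)\leq\mathit{class}(B)$. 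Chaining the inequalities yields $\mathit{class}(A)\leq\mathit{class}(B)$.

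The step I expect to be the main obstacle is the monotonicity $\mathit{class}(A)\leq\mathit{class}(C)$ used in the ``only if'' direction. It rests on two facts that the earlier development leaves implicit and that I would pin down first: that the class partial order coincides with the componentwise order on $(P_1,P_2,P_3)$, so that satisfying \emph{fewer} of these properties means a \emph{higher} (stronger) class; and that restricting to a subcomplex can only delete endpoints or break the $v_0$--$v_1$ path and never create them, hence can only raise the class. Both are finite checks over the five classes, but they are exactly where the topological order is matched to the computability order; once they are in place, the inequality chain and the canonical comparison close the proof.
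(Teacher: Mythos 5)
Your proof is correct and follows the same route as the paper's: the ``only if'' direction chains Theorems~\ref{thm:map} and~\ref{thm:iter}, and the ``if'' direction passes to canonical representatives via Theorem~\ref{thm:equiv} and the canonical-model comparison. The one place you go beyond the paper is in the ``only if'' direction, where the paper simply writes $\delta(B^k)=A$ whereas you correctly allow the image $C=\delta(B^b)$ to be a proper subcomplex of $A$ and supply the missing monotonicity step $\mathit{class}(A)\leq\mathit{class}(C)$ (a subcomplex can only lose the properties $P_1,P_2,P_3$, never gain them, and the class order is reverse inclusion on property sets); this is a legitimate tightening of the paper's argument rather than a deviation from it.
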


\begin{proof}
Assume first that we have  $A^*\preceq_\mathcal{A} B^*$, hence we have that $B^*$ solves $A$. This implies that there exists $k\in\mathbb{N}$ and a color and carrier-preserving map $\delta$ such that $\delta(B^k)=A$. According to theorem~\ref{thm:map}, we obtain that $\mathit{class}(A)\leq\mathit{class}(B^k)$, and according to theorem~\ref{thm:iter}, that, hence, $\mathit{class}(A)\leq\mathit{class}(B)$. 

Now let us assume that we have $\mathit{class}(A)\leq\mathit{class}(B)$. Therefore, the same holds for the canonical representatives $\overline{A}$ and $\overline{B}$. But as we have shown, $\mathit{class}(\overline{A})\leq\mathit{class}(\overline{B})$ implies that~$\overline{B}^*$ solves $\overline{A}$. But as task in the same class are equivalent (Theorem~\ref{thm:equiv}), this implies that $A^*\preceq_\mathcal{A} B^*$.
\end{proof}

We have shown that determining the relative computability of affine tasks is equivalent to computing their class. We have shown that determining to which class an affine task belongs is decidable. Therefore, determining their relative computability power is decidable as well.

\def\noopsort#1{} \def\No{\kern-.25em\lower.2ex\hbox{\char'27}}
  \def\no#1{\relax} \def\http#1{{\\{\small\tt
  http://www-litp.ibp.fr:80/{$\sim$}#1}}}

\end{document}